%% file: PRL_unmarked.tex
\documentclass[twocolumn,superscriptaddress,amsfont,amssymb,amsmath, showpacs,balancelastpage]{revtex4-1}
\usepackage{CJKutf8}

\usepackage{amsthm}
\newtheorem{theorem}{Theorem}

\usepackage[utf8]{inputenc}
\usepackage{amsmath,amssymb,amsfonts,stmaryrd}
\usepackage{physics}
\usepackage{dsfont}
\usepackage{graphicx}
\usepackage{xcolor}
\usepackage{graphicx}
\usepackage{cancel}
\usepackage{natbib}
\usepackage{color}
\usepackage{tikz}
\usepackage{mathrsfs}
\usepackage{relsize}
\usepackage{multirow}
\usepackage[linktocpage]{hyperref}
\usepackage[all]{xy}
\usepackage[export]{adjustbox}
\hypersetup{colorlinks=true,citecolor=blue,linkcolor=blue, urlcolor=blue, breaklinks=true}

\usepackage{float}

\usepackage{bm} 
\usepackage{subfigure} 
\usepackage{environ}
\usepackage{url}
\usepackage[margin=0.75in]{geometry}
\usepackage{ulem}

\usepackage{mathtools}

\newcommand{\Z}{{\mathbb Z}}

\NewEnviron{eqs}{%
\begin{equation}\begin{split}
    \BODY
\end{split}\end{equation}}

\definecolor{dkgreen}{rgb}{0,0.5,0}

\theoremstyle{definition}

\theoremstyle{remark}

\begin{document}

\begin{CJK*}{UTF8}{bsmi}

\title{Exploring Entropic Orders: \\
High Temperature Continuous Symmetry Breaking and Topological Order}

\author{Po-Shen Hsin}
\affiliation{Department of Mathematics, King’s College London, Strand, London WC2R 2LS, UK}
\email[E-mail: ]{po-shen.hsin@kcl.ac.uk}

\author{Ryohei Kobayashi}
\affiliation{Department of Applied Physics, University of Tokyo, Bunkyo-ku, Tokyo 113-8656, Japan}
\email[E-mail: ]{ryohei.k@ap.t.u-tokyo.ac.jp}

\date{\today}
\begin{abstract}
High temperature is usually expected to destroy order: as the Gibbs state approaches the infinite-temperature limit, it becomes an equal-weight ensemble over all states and the system is generically disordered. Recent works showed that entropic order can violate this expectation through coupling to bosons in classical lattice models and quantum field theories, where the ordered states have higher entropy. 
Here we present new analytic methods for constructing quantum lattice models that exhibit entropic orders.
In particular,
we construct quantum lattice models with continuous symmetry breaking at high temperature in 1+1 dimensions and clarify how entropic order can evade the Hohenberg-Mermin-Wagner theorems. We also construct high-temperature entropic $p+ip$ chiral topological superconducting states in 2+1 dimensions with temperature-independent anyon correlation functions. 
In addition, we obtain a broad family of high-temperature entropic non-chiral topological orders. We show that the entropic topological orders have strong higher form symmetries at high temperature unlike the conventional topological orders, and the symmetry is spontaneously broken.
These results follow from two general constructions that couple a given lattice model with a low-temperature ordered phase either to ordered bosons or, for local commuting-projector Hamiltonians, to more general bosonic degrees of freedom.
 \end{abstract}

\maketitle

\end{CJK*}

\textit{Introduction. }
A folklore is that at sufficiently high temperature a quantum system belongs to phase without any order. This is substantiated by the property that the Gibbs state for system with bounded energy will approach the maximally mixed state at sufficiently high temperature $T\rightarrow\infty$, i.e. $\beta=1/T\rightarrow 0$ (we will adopt the units where $k_B=1$):\cite{Kliesch:2014bxj,Bakshi:2024cqr}
\begin{equation}
 (\beta\rightarrow 0)\quad    e^{-\beta H}\rightarrow \mathbf{1}\quad \text{for }\;|H|<{\cal E}~.
\end{equation}
More precisely, \cite{Bakshi:2024cqr} shows that for tensor product Hilbert space with finite local Hilbert space, and the Hamiltonian is a sum of local term $H=\sum H_i$ such that $|H_i|\leq c$ is uniformly bounded by some constant $c$, then there is a critical temperature above which the Gibbs state becomes trivial. This is called ``sudden death of thermal entanglement''.

This property of Gibbs state can be violated if the local Hamiltonian is not bounded. Indeed, it was discovered in recent literature \cite{Han:2025eiw,Huang:2025gvi,Tsao:2026hle,Andriolo:2026udg} that in system with bosons whose particle number can be arbitrarily high $n=0,1,2\cdots$, there can be entropic order -- the Gibbs state does not approach the maximally mixed state, but instead ordered. Similar phenomenon is also discovered in quantum field theory \cite{Chai:2020zgq,Chaudhuri:2020xxb,Bajc:2020gpa,Liendo:2022bmv,Hawashin:2024dpp,Komargodski:2024zmt}.
The ordered states at arbitrarily high temperature are enabled by higher entropy compared to the disordered states, and they are called {\it entropic orders} in \cite{Han:2025eiw}.
 While there exist many constructions for entropic orders including classical statistical lattice models and quantum field theories, more general mechanisms for entropic quantum lattice models are desirable to explore the landscape of entropic orders.

In this Letter, we present two general methods for constructing quantum lattice models that exhibit entropic orders at high temperature from a given quantum lattice model with low-temperature order: a frustration-free Hamiltonian coupled to ordered bosons (Theorem \ref{thm:orderedbos}), and a commuting projector Hamiltonian coupled to general bosons (Theorem \ref{thm:localcommutingprojector}).
Both methods apply to any space dimensions and any lattice.  
We illustrate our method with various explicit models, including high temperature Heisenberg ferromagnetism and high temperature chiral topological orders, as well as high temperature stable entropic topological ordered phases described by local commuting projector models.  
In particular, we show that entropic models enables us to bypass the Hohenberg-Mermin-Wagner theorems, which are no-go theorems for continuous symmetry breaking in spin chain models at low space dimensions \cite{PhysRev.158.383,PhysRevLett.17.1133}. 
The entropic orders discussed here are distinct from related phenomena in which a quantum system becomes ordered at finite, nonzero temperature, such as inverse melting \cite{10.1063-1.1794652,PhysRevE.63.031503}; the Pomeranchuk effect in  
$^3$He, where the solid phase is favored over the liquid phase at higher temperature under suitable pressure \cite{osti_4415012,RevModPhys.69.683}; the order-by-disorder mechanism \cite{Villain1980OrderAA}; and self-correcting quantum memories \cite{Dennis:2001nw,alicki2010thermal,Bombin_2013,Hsin:2024pdi}. In all of these cases, the order disappears once the temperature becomes sufficiently high. This is fundamentally different from the entropic orders which persist to arbitrarily high temperature.

\textit{Entropic Orders From Ordered Bosons. }
Let us consider a local Hamiltonian in the tensor product Hilbert space with finite dimensional local Hilbert spaces, $H_0=\sum H_i$.
We will shift each local Hamiltonian by a constant to make sure that the spectrum of energies is non-negative, $E_I\geq 0$.
The Gibbs state at temperature $T=1/\beta$ is $    \rho_0= \sum_I e^{-\beta E_I} |E_I\rangle\langle E_I|,
$ 
where $|E_I\rangle$ is the energy eigenstate with energy $E_I$.

To simplify the discussion, we will focus on the original model $H_0$ that is frustration free, i.e. $H_0=\sum_i H_i$ such that the ground states minimize each term $H_i$. We will also add a constant to the Hamiltonian such that the energy eigenvalues are non-negative $H_i\geq 0$, hence the ground state has energy $E_0=0$~\footnote{Frustration free model can be gapless or gapped. In the case that the frustration free model is gapless, the dynamical exponent must satisfy $z\geq 2$ \cite{Masaoka:2024oes}. 
Conditions for a local Hamiltonian to be frustration free are discussed in e.g. \cite{Sattath:2016gfh,Masaoka:2025tbj}.}.

In the following, we will couple the above Hamiltonian to two types of bosons with local boson number operators $n_i,m_i\in\{0,1,2,\cdots\}$, such that $n_i,m_i$ mutually commute and also commute with all operators in the original Hilbert space for $H_0$. The resulting model realizes an entropic order, such that a ground state of the original Hamiltonian $H_0$ is realized by a Gibbs state at arbitrary high finite temperature.

The ordered boson entropic models is $H=H_1+H_2$:
\begin{equation}\label{eqn:twobosonsH}
\begin{split}
    H_1 &=\sum_i (a+bH_i)(1+n_i), 
    \\
    H_2 &=\sum_i (a'-b' \delta_{(n_i-n_{i+1}),0})(1+m_i)~,
    \end{split}
\end{equation}
where $a,b,a',b'$ are real constants such that the $a\geq 0$, $b> 0$, and $a'>b'>0$.
The energy spectrum is bounded below since $H_i\geq 0$ by the assumption described above.

\begin{theorem}\label{thm:orderedbos}
Suppose that the Hamiltonian $H_0=\sum_i H_i$ is frustration free, $H_i\ge 0$ and has a unique ground state $\ket{E_0}$ with the energy $E_0=0$.
    Consider the Gibbs state of (\ref{eqn:twobosonsH}) with $a'\rightarrow b'^+,\quad a=0~.$
At any nonzero temperature, the Gibbs state after tracing out the bosons $|n_i\rangle,|m_j\rangle$ produces the ground state of the original model $H_0$.
    
\end{theorem}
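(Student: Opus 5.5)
Since all operators $n_i$, $m_i$ and $H_i$ mutually commute, the first step is to work in a joint eigenbasis. The plan is to sum out the boson occupation numbers exactly, obtain an effective Boltzmann weight for the spin sector, and then take the limit $a'\to b'^+$. A convenient eigenbasis is $|E_I\rangle\otimes|\{n_i\}\rangle\otimes|\{m_i\}\rangle$. The $H_i$ need not commute with each other, but $H_1$ with $a=0$ is $b\sum_i H_i(1+n_i)$. It is therefore cleaner to first sum over the $m_i$, which only couple to the $n_i$.

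\textbf{Summing out the $m$ bosons.} For fixed $\{n_i\}$, each $m_i$ enters through the weight $e^{-\beta c_i(1+m_i)}$ with $c_i=a'-b'\delta_{n_i,n_{i+1}}>0$. Summing the geometric series gives the factor $e^{-\beta c_i}/(1-e^{-\beta c_i})$. When $n_i=n_{i+1}$ we have $c_i=a'-b'\to 0^+$, and this factor diverges like $1/(\beta(a'-b'))$. When $n_i\neq n_{i+1}$ the factor stays finite, since $c_i=a'$. Consequently, in the limit $a'\to b'^+$ the normalized distribution over $\{n_i\}$ concentrates on the uniform configurations $n_i=n$ for all $i$. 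Every bond that breaks uniformity costs a relative factor $O(\beta(a'-b'))\to 0$. On a finite lattice with $L$ bonds, the uniform sector carries weight $\propto (\beta(a'-b'))^{-L}$, and every other sector is suppressed by at least one power of $\beta(a'-b')$.

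\textbf{Summing out the $n$ bosons in the uniform sector.} With $n_i=n$ for all $i$, the remaining weight is $\operatorname{Tr}\, e^{-\beta b(1+n)H_0}$ projected onto the spin space. The sum over $n$ gives the unnormalized reduced state
\[
\rho\propto\sum_{n\ge 0} e^{-\beta b(1+n)H_0}=\sum_I \frac{e^{-\beta b E_I}}{1-e^{-\beta b E_I}}\,|E_I\rangle\langle E_I|,
\]
where we use that in the uniform sector $H_1=b(1+n)H_0$. The coefficient is infinite for $E_0=0$: the sum over $n$ of $e^{0}$ diverges. For every $E_I>0$ it is finite. To make this rigorous, I would truncate $n\le N$ and normalize. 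The ground-state weight then grows like $N$, while each excited weight is bounded by $1/(e^{\beta bE_I}-1)$. The total excited weight is at most $\sum_{I\ne 0}(e^{\beta b E_I}-1)^{-1}$, which is finite on a finite system because the spectrum is discrete. Sending $N\to\infty$ gives $\rho\to|E_0\rangle\langle E_0|$, using the assumed uniqueness of the ground state. This holds for any $\beta>0$.

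\textbf{Main obstacle.} The delicate point is the order of the limits ($a'\to b'^+$, the boson cutoff, and possibly the system size) and keeping the non-uniform sectors under control. In a non-uniform sector, $H_1=b\sum_i(1+n_i)H_i$ is not proportional to $H_0$. However, its weight is still $\le 1$ per configuration summed over the $n$'s, and it carries the vanishing prefactor from the $m$-sums. I would bound the non-uniform contribution by counting the broken bonds, then show that the uniform-sector divergence dominates both it and the excited-state terms. I would present the argument at fixed finite system size, noting that the conclusion holds at every $\beta>0$.
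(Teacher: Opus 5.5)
Your route is the paper's: sum out the $m_i$ (the paper rewrites the per-bond factor $e^{-\beta c_i}/(1-e^{-\beta c_i})$ as an effective inverse temperature $\beta'\to\infty$ for the $\delta_{n_i,n_{i+1}}$ coupling), keep only the uniform sector $n_i=n$, and do the geometric sum over $n$, whose ground-state coefficient diverges. The only difference is the regulator. The paper keeps $a>0$ and takes $\lim_{a\to 0^+}\lim_{a'\to b'^+}$, which gives coefficients $e^{-\beta(aV+bE_I)}/(1-e^{-\beta(aV+bE_I)})$. You set $a=0$ and truncate $n\le N$ instead.

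One claim in your last paragraph is wrong, and it is exactly where the order of limits matters. The non-uniform sectors are not less divergent than the uniform one. By frustration-freeness, $H_i|E_0\rangle=0$ for every $i$, so $e^{-\beta b\sum_i(1+n_i)H_i}$ acts as the identity on $|E_0\rangle$ in \emph{every} configuration. On a ring, a sector with $k$ broken bonds therefore contributes of order $N^{k}$ to the ground-state weight, against $N+1$ from the uniform sector. Without a regulator, at $a=0$ the partition function is infinite for every $a'>b'$. So ``weight $\le 1$ per configuration'' gives no bound once you sum over the $n$'s, and the uniform sector wins only through the $m$-prefactor.

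Concretely, let $w_1=e^{-\beta(a'-b')}/(1-e^{-\beta(a'-b')})$ and $w_0=e^{-\beta a'}/(1-e^{-\beta a'})$ be the unbroken and broken bond weights, and set $r=w_0/w_1=O(\beta(a'-b'))$. Then the ratio of the non-uniform trace to the uniform trace is at most $D\sum_{k\ge 1}\binom{L}{k}r^k(N+1)^{k}$, where $D$ is the spin Hilbert-space dimension. This bound vanishes if $a'\to b'^+$ at fixed $N$ and $L$, but not if $N\to\infty$ first (for instance, if $rN$ stays bounded away from zero).

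Once you make the order $\lim_{N\to\infty}\lim_{a'\to b'^+}$ explicit, which is the analogue of the paper's order, your argument is complete. It is in fact more careful about the non-uniform sectors than the paper, which simply projects them away via $\beta'\to\infty$.
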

See Supplemental Material (SM) \cite{supmat} for the proof. In the proof, the ordered bosons reduce the coupling to the non-commuting local terms of the frustration-free Hamiltonian to a coupling between the full Hamiltonian $H$ and a single collective bosonic occupation number, thereby making the partition function exactly tractable.

When the original model has degenerate ground states $|0\rangle_\alpha$ labeled by $\alpha$, in the entropic model the Gibbs state will produce a mixed state given by equal weight sum of these degenerate ground states $\bigoplus_\alpha|0\rangle_\alpha$ with the limit $\lim_{L\to\infty}\lim_{a\to 0^+}\lim_{a'\to b'^+}$. 
We remark that the entropic model with ordered bosons in theorem \ref{thm:orderedbos} is reminiscent of the entropic order described by continuum field theory with two bosons discussed in \cite{Han:2025eiw}.

\textit{Spontaneously broken continuous symmetry at high temperature. }
We construct quantum lattice models in 1+1d that spontaneously break continuous symmetries at arbitrary high temperature.
The symmetry breaking can be detected by first turning on a small external field $h$ that breaks the symmetry explicitly (and thus lifting the vacua degeneracy), taking the thermodynamic limit, and then taking the limit $h\rightarrow 0$. 

Consider the ferromagnetic Heisenberg model whose ground state spontaneously breaks $SO(3)$ symmetry, coupled to a boson that enables an entropic order at high temperature.
At any fixed finite temperature, for any nonzero $h$ the Gibbs state realizes an exact ground state of the original Heisenberg model breaking the symmetry, therefore the order parameter of the continuous symmetry has nonzero expectation value in the limit $h\rightarrow 0$ in the entropic model at any fixed temperature. Thus we conclude that the continuous symmetry is spontaneously broken in the entropic model at finite temperature.
For spin-$s$, the Heisenberg model is
\begin{align*}
    H_0
    &=J\sum_i \left(2s(2s+1)-(S_i^a+S_{i+1}^a)(S_i^a+S_{i+1}^a)\right)~,
\end{align*}
where $J>0$ is a constant, and $S^\pm_i:=S^x_i\pm iS^y$. The model has $SO(3)$ symmetry, with generator $\tau^{\alpha}=\sum_i S^a_i$ for $\alpha=x,y,z$, as well as time-reversal symmetry. The last line shows that the Hamiltonian is a sum of non-negative terms. Moreover, the ground state minimizes each Hamiltonian term, hence frustration free.

To lift the degeneracy, we add an order parameter
\begin{equation}
    H_0'=H_0+h\sum_i (s+S_i^z)~,
\end{equation}
with $h>0$.  This leaves a unique ground state.

Now, we start with the frustration free model $H_0'$ and couple it to the bosons with number operators $n_i,m_i$. 
Denote $H_0=\sum_i H_i$.
The entropic Heisenberg model is
\begin{equation}
\begin{split}
    H= & \sum_i\left(a+bH_i+bh (s+S_i^z)\right)(1+n_i) \\
    + &\sum_i\left(a'-b'\delta_{(n_i-n_{i+1}),0}\right)(1+m_i)~.
    \end{split}
\end{equation}
For $a\rightarrow 0,a'\rightarrow b'^+$, the Gibbs state at any nonzero temperature for the above 1+1d model with external field $h$ exhibits continuous symmetry breaking: for system size $L$, $\lim_{h\rightarrow 0^+}\lim_{L\rightarrow\infty}
    \lim_{a\rightarrow 0,a'\rightarrow b'^+}\langle S_i^z\rangle=-s\neq 0$.

\textit{How entropic models bypass Hohenberg-Mermin-Wagner theorems.}
The original Hohenberg-Mermin-Wagner (HMW) theorems are no-go theorems for symmetry breaking on certain spin chain models \cite{PhysRev.158.383,PhysRevLett.17.1133} at $T>0$, which is later generalized to the cases $T=0$ in e.g. \cite{10.1143/PTP.54.1039,BSriramShastry_1992} (see  \cite{Halperin_2018} for a review). The theorems are derived using Bogoliubov inequality that originates from Cauchy-Schwarz inequality for correlation functions. Here we will review the arguments following \cite{Watanabe:2023mym} and discuss how the conclusion is modified in the entropic models due to the bosons.

We first review the logic of the original HMW theorem. Consider a 1d spin chain model with the Hamiltonian $H$ with the system size $L$. The Hamiltonian has a spontaneously broken continuous symmetry, whose charge is $Q = \sum_x Q_x$. The order parameter for the spontaneous symmetry breaking (SSB) is given by a local Hermitian operator $O_x$ at the site $x$, with $O:= \sum_x O_x$. We assume the translation invariance of the operators $H, Q, O$.
We define the Hamiltonian with the Zeeman field as $H(h) = H -hO$.
We also define a Hermitian operator $R_i$ at each site $x$, defined through the equation $O_x=[iQ_x,R_x]$. 
The HMW theorem is then derived by the use of Bogoliubov inequality valid at finite temperature $T>0$ \cite{Bogoliubov:1962,PhysRev.158.383,PhysRevLett.17.1133},

\begin{align}
    \frac{1}{L^2} \sum_k\langle R^{\dagger}_k R_k + R_k R^\dagger_k\rangle \ge \frac{1}{L^2} \sum_k \frac{2T |\langle [iQ_k^\dagger, R_k] \rangle|^2}{\langle [Q_k, [H(h), Q_k^\dagger]] \rangle}~,
    \label{eq:MWinequality}
\end{align}
where $\mathcal{O}_k$ is the Fourier transform of the local operators, $\mathcal{O}_k := \sum_x e^{ikx} \mathcal{O}_x$. $\langle \rangle$ denotes the evaluation in the Gibbs ensemble of $H(h)$: $\langle \mathcal{O}\rangle = \mathrm{Tr}(e^{-\beta H(h)}\mathcal{O})/\mathrm{Tr}(e^{-\beta H(h)})$.
Let us also write the denominator of the rhs as
\begin{align}
\frac{1}{L} [Q_k, [H(h), Q_k^\dagger]] = A_k + h B_k~.
\label{eq:Ak+hBk}
\end{align}

The lhs of \eqref{eq:MWinequality} is equal to $\frac{2}{L}\sum_{x}\langle R_x^2 \rangle$, which remains $O(1)$ in the limit of $L\to\infty$. On the rhs of \eqref{eq:MWinequality}, the numerator $\langle [iQ_k^\dagger, R_k] \rangle/L$ approaches the order parameter $\langle O_x\rangle$ in the limit $k\to 0$. Meanwhile, as long as the eigenvalues of local Hamiltonian $H_i$ and local charge $Q_i$ are bounded from above, the denominator of the rhs has vanishing asymptotic behavior $A_k\propto k^2$ as $k\to 0$, as shown in SM \cite{supmat}.

Now, in the limit of $L\to\infty$, the sum over $k$ in \eqref{eq:MWinequality} is converted into the integral over $-\pi\le k\le \pi$. Eq.~\eqref{eq:MWinequality} is then satisfied in the limit $\lim_{h\to 0^+} \lim_{L\to\infty}$ only if
\begin{align}
    \int^\pi_{-\pi}\frac{dk}{2\pi} \frac{2T|m|^2}{A_k} = O(1)~,
\end{align}
where $m:=\langle O_x\rangle_{h\to 0^+}$ is the order parameter. Since $A_k\propto k^2$ as $k\to 0$, one needs to satisfy $m=0$, therefore the symmetry cannot be broken at finite temperature.

Let us now consider the entropic Heisenberg model at $s=1/2$: the Hamiltonian is $H=H_1+H_2$,
\begin{equation}
\begin{split}
    H_1 &=\sum_i \left(a+b(1-X_iX_{i+1}-Y_{i}Y_{i+1} - Z_{i} Z_{i+1})\right)(1+n_i)~, \\
    H_2 &=\sum_i (a'-b' \delta_{(n_i-n_{i+1}),0})(1+m_i)~,
    \end{split}
\end{equation}
where the $U(1)$ symmetry is generated by $Q=\frac{1}{2}X,$ with the order parameter $O=\frac{1}{2} Z, R=-\frac{1}{2}Y$. In this entropic model, \eqref{eq:MWinequality} is compatible with the SSB due to unbounded local Hamiltonian. 
Now, let's consider the limit of $a'\to b'^+, a\to 0^+, L\to \infty$ while fixing $h$ to be finite. After tracing out the bosons $\{m_i,n_i\}$ with this limit, only the polarized ground state of the Heisenberg model satisfying $H_i=Z_i=1$ contributes to the expression,
\begin{align*}
\lim_{a\to 0^+}\lim_{a'\to b'^+} A_k = 2b(1-\cos k)  \lim_{a\to 0^+}\left(\frac{e^{-\beta a}}{1-e^{-\beta a}} \right)\rightarrow\infty~.
\end{align*}
See SM \cite{supmat} for detailed derivations of this divergent behavior.
Hence, the sum over $k$ in the rhs of \eqref{eq:MWinequality} with $k=2\pi j/L, j\in \Z_L$ becomes zero in the limit $\lim_{a\to 0^+}$ due to the diverging denominator, even if the numerator remains nonzero. As a result, \eqref{eq:MWinequality} is trivially satisfied in the entropic model in the limit $    \lim_{h\to 0^+}\lim_{L\to\infty}\lim_{a\to 0^+}\lim_{a'\to b'^+}$,
even if the order parameter in the numerator, $\langle O_x\rangle_{h\to 0^+}$, remains finite. Therefore, the Bogoliubov inequality is compatible with SSB in the entropic model.
A possible field theory description for the entropic model can be obtained from the standard field theory description of the Heisenberg spin chain using compact bosons (see e.g. \cite{Nahum:2025bdg}) and replace the Heisenberg coupling by the boson current for another complex boson that has a quartic potential similar to \cite{Han:2025eiw} so the boson is ordered. We leave the detail analysis of such field theory to future study.

The example also illustrates that ordered bosons are important for the construction.
The goal is to entropically favor the ferromagnetic ground state selected by the field. However, if the bosonic variables independently fluctuate from site to site, they act analogously to a disordered local field or locally varying couplings, which does not coherently favor a uniform ferromagnetic state. The ordered bosonic sector avoids this problem: the bosons are forced into a spatially ordered configuration, so that tracing them out produces an entropic weight depending on the global energy of the original frustration-free Hamiltonian. In the appropriate coupling limit, this selectively enhances the exact ground-state sector of $H_0$.

\textit{Chiral entropic states. }
Let us apply the construction of the entropic model to chiral topological states.
It is known that any local Hamiltonian $H_0=\sum_{i}H_i$ with gapped ground states admits decomposition into frustration-free quasi-local Hamiltonians $H_0=\sum_i \widetilde H_i$, where $\widetilde H_i$ has tails decaying faster than any polynomials \cite{Kitaev_2006}. See also \cite{Sengoku2025} for explicit constructions of frustration free Hamiltonians of gapped free complex fermions with exponentially decaying tails.

For instance, consider a frustration free realization of the $p+ip$ superconductor of free Majorana fermions in 2+1d: $H^{(p+ip)}_0= \sum_i \widetilde H^{(p+ip)}_i~,$
where $\widetilde H^{(p+ip)}_i$ is a Hermitian quadratic operator of Majorana fermions with exponentially decaying tails. 
The entropic model \eqref{eqn:twobosonsH} with this $p+ip$ Hamiltonian $H_0=H^{(p+ip)}_0$ then produces entropic version of chiral topological superconductors.

After gauging the fermion parity symmetry of the above entropic model with $\nu$ copies of $p+ip$ superconductors, the entropic models describe entropic chiral $\mathbb{Z}_2$ topological orders given by $Spin(\nu)_1$ Chern-Simons theory realized at any finite temperature. In particular, for $\nu=2$ this is the Abelian anyon $Spin(2)_1=U(1)_4$, for $\nu=4$ this is the Abelian anyon $Spin(4)_1=SU(2)_1\times SU(2)_1=U(1)_2\times U(1)_2$, for $\nu=8$ this is the three fermion theory $Spin(8)_1=\text{FFF}$.

Consider the correlation functions of the anyon string operators ${\cal W}_i$ such as the vortices in topological superconductors. The operators are independent of the boson operator $n_i$, and thus the correlation functions do not depend on the temperature in the entropic model for $a=0,a'\rightarrow b'^+$:
\begin{equation}
    \text{Tr}\left(e^{-\beta H}\prod_i {\cal W}_i\right)=
    \langle 0|\prod_i {\cal W}_i |0\rangle~,
\end{equation}
where $|0\rangle$ is the ground state of the original model. In particular, when the correlation function contains a string operator ${\cal W}_i$ that creates excitations, then the thermal correlation function vanishes, i.e. no anyons are created from thermal fluctuations.
This is in drastic contrast to the thermal behavior of the $p$-wave topological superconductors (e.g. \cite{R_ising_2019}).

\textit{Exact higher-form symmetries at high temperatures}.
In the above chiral topological order state, since the string operators that create nontrivial excitations have vanishing correlation functions, the Gibbs state has {\it exact (strong) 1-form symmetry} generated by closed loop operators of anyons: ${\cal W}_\text{loop} \rho=\alpha\rho,\quad \rho=e^{-\beta H}$,
where $\alpha$ is a number. For comparison, in the ordinary chiral topological superconductors, the anyon loop symmetry is only a weak (average) symmetry where the symmetry operator commutes with the density matrix, but not a exact (strong) symmetry~\footnote{
For an introduction to the symmetry in mixed states, see e.g. \cite{deGroot:2021vdi}.
}.
Moreover, just as the ground state of the original model has spontaneously broken strong 1-form symmetry down to nothing, since the anyon loops have the same correlation functions in the Gibbs state of the entropic model as the original ground states, the anyon loop strong 1-form symmetry is also spontaneously broken in the Gibbs state of the entropic model. This is true for the Gibbs state of the entropic model at arbitrarily high temperature.

More generally, a finite-temperature Gibbs state on a tensor-product Hilbert space cannot realize an exact strong symmetry whenever the theory admits charged excitations of finite energy. Indeed, at any nonzero temperature, such excitations carry nonzero Boltzmann weight, so the thermal state is no longer confined to the neutral sector required by the strong symmetry. Entropic models evade this obstruction through the divergent entropy of the ground-state sector due to the bosons, which overwhelmingly enhances the statistical weight of strongly symmetric states and correspondingly suppresses the contribution of charged excitations.

We remark that the (approximate) strong 2-form symmetry of finite-temperature Gibbs state is also discussed in e.g. \cite{Zhou:2025bal}. Here, approximate strong symmetry means that the Gibbs state $\rho$ is mapped by a local quantum channel $\mathcal{C}$ into a state with the exact strong symmetry. This approximate strong symmetry is then used to show that the low-temperature thermal state must exhibit long-range entanglement. By contrast, in the entropic models the Gibbs state itself enjoys an \textit{exact} strong higher-form symmetry without the need of a local recovery channel.
This exact strong symmetry persists even at arbitrarily high temperature, rather than only below a finite-temperature threshold.

 \textit{High temperature Hall conductivity in entropic models.}

We start with an integer quantum Hall system of Chern number $\nu$ and construct fractional quantum Hall (FQH) system by gauging the fermion parity. The well-defined charge operator is $Q':=\sum_i 2Q_i$, i.e. the faithful $U(1)$ symmetry is the $\mathbb{Z}_2$ quotient of the previous $U(1)$. The gauged model has Hall conductance $\sigma_{xy}=\frac{e^2}{4h}\nu$.

When the original Chern insulator is realized by a frustration free Hamiltonian, then the resulting bosonic FQH Hamiltonian $H_0$ is also frustration free.
We can couple this system to ordered boson to obtain an entropic model. Let us compute the fractional Hall conductivity of the entropic model in Gibbs state in the limit of couplings where the high temperature Gibbs state reduces to a pure state.
To do so, we can put the system on torus with twisted boundary conditions in the $x,y$ direction by the $U(1)$ transformation $\theta_x,\theta_y$. We denote the twisted FQH Hamiltonian on a torus $H_0(\theta_x,\theta_y)$; this is a frustration free parent Hamiltonian of the FQH state $\ket{\theta_x,\theta_y}$, where the $U(1)$ twist $\theta_x, \theta_y$ is inserted adiabatically starting from the untwisted ground state $\ket{0,0}$. Here, any choice of ground state will give the same Hall conductance. 

The twisted boundary condition can be implemented by the $U(1)$ symmetry defects $V_{\theta_x,\theta_y}$. Consider the following path in the state space: starting from the ground state without twisted boundary condition $(\theta_x,\theta_y)=(0,0)$, we continuously increase the twist to $(\theta_x=2\pi,\theta_y=0)$, and then $(\theta_x=2\pi,\theta_y=2\pi)$, and then decrease it to $(\theta_x=0,\theta_y=2\pi)$, and finally to $(\theta_x=0,\theta_y=0)$. Then the Hall conductance is obtained from the Berry phase of the state along the path \cite{Hastings2014quantization}:
\begin{small}
\begin{equation}
 \langle V_{0,2\pi}^\dag V_{2\pi,0}^\dag V_{0,2\pi}V_{2\pi,0}\rangle=\text{Tr}\left(V_{0,2\pi}^\dag V_{2\pi,0}^\dag V_{0,2\pi}V_{2\pi,0}\rho\right)
=e^{\frac{2\pi i \nu}{4}}~,
\end{equation}
\end{small}
where the Berry phase reduces to the commutator of the $U(1)$ symmetry defects along $x,y$ directions.
In the limit $a'\to b', a=0$, the density matrix becomes the ground state of the original model, and the above equation gives the same fractional Hall conductivity as the original model. 
The Hall conductance discussed above should be understood as a response protected by the strong $U(1)$ symmetry of the entropic Gibbs state. In this setting, it is meaningful to consider adiabatic insertions of $U(1)$ fluxes, or equivalently $U(1)$ twisted boundary conditions, and to define the Hall response by following Laughlin's argument. We note that the system here has strong $U(1)$ symmetry without additional weak symmetry. For more general mixed states in 2+1D that also have additional weak symmetry such as time-reversal, see e.g. \cite{Ma2023average,Ma:2023rji,Ma:2024kma}. The symmetry properties of mixed states can also be treated using symmetry TFT methods as in e.g. \cite{qi2025symmetrytacoequivalencesgapped,  schafernameki2025symtftapproachmixedstates,luo2025topologicalholographymixedstatephases}.
We note that the correlation functions in the Gibbs state is given by averaging over the degenerate ground states in the original model; since every ground state in the original model gives the same Hall conductivity, the average does not do anything.

\textit{Entropic Orders from Local Commuting Projector Models.}
We begin with a local commuting projector model on tensor product Hilbert space:
\begin{equation}\label{eqn:lcp}
    H_0=-\sum P_i,\quad P_i^2=P_i,\quad P_iP_j=P_jP_i~.
\end{equation}

Let us enlarge the Hilbert space by including environment: the new Hilbert space is the original one tensored with an infinite dimensional Hilbert space labeled by arbitrary boson numbers $|n_i\rangle$ for $n_i=0,1,2,\cdots$. We modify the original Hamiltonian $H_0$ to couple the system with the environment by the new Hamiltonian
\begin{equation}\label{eqn:entropicmodelP}
    H=\sum (a-bP_i)(1+n_i)~,
\end{equation}
where $a>b>0$. The boson number operators $n_i$ commute with all operators in the original Hilbert space, $n_i P_j=P_j n_i$.
The ground states satisfy $P_i=1$, $n_i=0$ for all $i$, i.e. the ground states are the same as the original model, with the environment bosons absent.

\begin{theorem}\label{thm:localcommutingprojector}
    The Gibbs state of the entropic model (\ref{eqn:entropicmodelP}) at temperature $T$ with the environment bosons $|n_i\rangle$ traced out is equivalent to the Gibbs state of the clean model (\ref{eqn:lcp}) at temperature $T_\text{eff}$, where $T_\text{eff}$ is related to $T$ by their inverse temperatures $\beta=1/T,\beta_\text{eff}=1/T_\text{eff}$ as
    \begin{equation}\label{eqn:effectivetemp0}
    \beta_\text{eff}=\beta b+\log\left(\frac{1-e^{-\beta a}}{1-e^{-\beta(a-b)}}\right)~.
\end{equation}

\end{theorem}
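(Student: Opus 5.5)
Since the projectors $P_i$ mutually commute and the boson number operators $n_i$ commute with everything in the original Hilbert space, all terms of (\ref{eqn:entropicmodelP}) commute, and $e^{-\beta H}=\prod_i e^{-\beta(a-bP_i)(1+n_i)}$ factorizes site by site. The plan is to take the partial trace over the bosons factor by factor, and then show that the resulting operator on the original Hilbert space is proportional to $e^{\beta_\text{eff}\sum_i P_i}=e^{-\beta_\text{eff}H_0}$.

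\emph{Step 1.} For each $i$, use $P_i^2=P_i$ to write the spectral decomposition $f(P_i)=f(1)P_i+f(0)(1-P_i)$. The boson trace then becomes a geometric series:
\begin{equation}
\sum_{n_i=0}^\infty e^{-\beta(a-bP_i)(1+n_i)}=\frac{e^{-\beta(a-b)}}{1-e^{-\beta(a-b)}}\,P_i+\frac{e^{-\beta a}}{1-e^{-\beta a}}\,(1-P_i)~.
\end{equation}
Both series converge because $a>b>0$, so both exponents are positive.

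\emph{Step 2.} Write the right-hand side as $c_i\,e^{\beta_\text{eff}P_i}$ with a scalar $c_i=\frac{e^{-\beta a}}{1-e^{-\beta a}}$. Using $e^{\beta_\text{eff}P_i}=1-P_i+e^{\beta_\text{eff}}P_i$, matching the $P_i$ coefficient gives
\begin{equation}
e^{\beta_\text{eff}}=\frac{e^{-\beta(a-b)}}{1-e^{-\beta(a-b)}}\cdot\frac{1-e^{-\beta a}}{e^{-\beta a}}=e^{\beta b}\,\frac{1-e^{-\beta a}}{1-e^{-\beta(a-b)}}~,
\end{equation}
which is exactly (\ref{eqn:effectivetemp0}).

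\emph{Step 3.} Multiply over all sites. Because the $P_i$ commute, $\prod_i e^{\beta_\text{eff}P_i}=e^{-\beta_\text{eff}H_0}$, and the overall constant $\prod_i c_i$ cancels once the state is normalized. So the reduced Gibbs state is $e^{-\beta_\text{eff}H_0}/\mathrm{Tr}\,e^{-\beta_\text{eff}H_0}$, which is the claim.

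There is no serious obstacle. The only point requiring care is justifying the factorization and the exchange of the infinite boson sums with the operator product, which holds since everything commutes and the series converge absolutely. It is also worth remarking that $\beta_\text{eff}\to\log\frac{a}{a-b}>0$ as $\beta\to0$, which is what gives the high-temperature order.
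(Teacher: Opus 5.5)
Your proposal is correct and follows essentially the same route as the paper's proof. Both sum the per-site geometric series using $P_i^2=P_i$, rewrite each factor as $e^{A+BP_i}$ with $e^{A}=\frac{e^{-\beta a}}{1-e^{-\beta a}}$, and read off $\beta_\text{eff}=B$ from the coefficient of $P_i$.
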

In particular, even when $\beta$ is finite, $\beta_\text{eff}\rightarrow \infty$ (i.e. approaching zero effective temperature for the original model $H_0$) for $a\rightarrow b^+$.

The proof of Theorem \ref{thm:localcommutingprojector} is found in SM \cite{supmat}.
We remark that a special case of the entropic models for local commuting projectors is discussed in \cite{Han:2025eiw,Tsao:2026hle}, which only consider Calderbank–Shor–Steane (CSS) codes. Note that the entropic versions of these stabilizer codes are not error correcting quantum codes as ordinary stabilizer codes due to zero code distance; instead, one should consider them as subsystem codes, where the bosons correspond to gauge qudits. 
Since we want to engineer the ground state wavefunction of the original model using the Gibbs state of the entropic model, the question is how to prepare the Gibbs states rather than the ground states in a fault-tolerant way such as quantum Gibbs sampling \cite{Kastoryano:2016feb,Rouze:2024ufx,Bergamaschi:2024jls}.

\textit{Entropic non-chiral topological orders and topological phase transitions.}
Using the above construction of entropic models from local commuting projector models, we can construct entropic versions of the Levin-Wen lattice models \cite{Levin:2004mi}, Walker-Wang lattice models \cite{walker201131tqftstopologicalinsulators,von_Keyserlingk_2013},
quantum double models \cite{Kitaev:1997wr,Moradi:2014cfa} and their twisted versions \cite{Hu:2012wx}, fractonic models such as \cite{Haah:2011drr,Slagle:2017mzz,Shirley:2017suz},
as well as lattice models for group cohomology symmetry protected topological (SPT) phases \cite{Chen:2011pg} and beyond group cohomology SPT phases \cite{Fidkowski:2019nju,Chen:2021xks,Fidkowski:2024mof}.

Of particular interest are the thermally stable topological orders such as the 4+1d or 6+1d loop toric codes \cite{Dennis:2001nw,alicki2010thermal,Bombin_2013}. The model exhibits quantum memories at finite temperature $0<T<T_c$.
Since the original loop toric code models have electric and magnetic excitations, different excitations have different critical temperatures depending on the tension of the excitations, called $T_A,T_B$ in \cite{Lu:2019owx}. The critical temperature for quantum memory is $T_c=\text{min}(T_A,T_B)$. For temperature in between $T_c<T<\text{max}(T_{A},T_B):=T_c'$, one type of excitation is suppressed, while the other type of excitation still proliferates, and we have a classical memory with nonzero topological entanglement entropy as shown in \cite{Lu:2019owx}.

In the corresponding entropic model, where the effective inverse temperature is (\ref{eqn:effectivetemp0}) this means that there are following phase transitions from dialing the couplings $a,b$:
(1) the entropic model has stable topological order, (2) the entropic model does not have topological order, but it has classical memory, (3) no topological order and no classical memory. The detailed couplings at these phase separations are given in SM \cite{supmat}.

We remark that the topological phase transitions from dialing the couplings $a,b$ at fixed temperature are described by local commuting projector models. This is in contrast to the conventional transverse field Ising model, where the phase transitions are described by non-commuting Hamiltonians.

\textit{Conclusion.}
In this work, we substantially enlarge the scope of entropic orders by introducing two analytic constructions that realize new forms of entropic order, including continuous symmetry breaking, chiral topological states, and non-chiral stable topologically ordered phases at arbitrarily high temperature.

Several directions remain open. It would be interesting to study their low-free-energy excited states when the continuous symmetry is broken, including possible Goldstone modes in models with ordered bosonic sectors. 
Our constructions mainly realize high-temperature states that reduce, in the appropriate sector, to pure-state orders. It would be valuable to generalize the framework to genuinely mixed-state orders, including mixed-state topological orders and strong-to-weak symmetry-breaking phenomena. Finally, it would be useful to identify physical observables that sharply diagnose these entropic orders, such as thermal and thermoelectric transport coefficients in chiral entropic states. These questions may help clarify how symmetry breaking and topology can persist in thermal quantum matter.

\section*{Acknowledgments}
We thank Haruki Watanabe and Zhu-Xi Luo for helpful discussions.
P-S.H. is supported by Department of Mathematics, King's College London.
R. K. is supported by Department of Applied Physics at University of Tokyo.

\bibliography{biblio.bib}

\onecolumngrid

\vspace{0.3cm}

\newpage

\input{supp_mat.tex}

\vfill



\end{document}

%% file: supp_mat.tex
\begin{center}
\Large{\bf Supplemental Materials}
\end{center}
\onecolumngrid

\section{Proof of Theorem 1}

\noindent {\bf Theorem 1.} 
Suppose that the Hamiltonian $H_0=\sum_i H_i$ is frustration free, $H_i\ge 0$ and has a unique ground state $\ket{E_0}$ with the energy $E_0=0$. 
Suppose the gap scales as $E_1\propto V^{-\eta}$ for some non-negative real number $\eta$ (when $\eta=0$, the system is gapped).
Consider the Gibbs state of (\ref{eqn:twobosonsH}) with
    \begin{equation}
 0\leq a'- b'\ll a',\quad 0\leq a\ll {\cal O}\left(V^{-(1+\eta)}\right)~.
    \end{equation}
\begin{itemize}
    \item[(1)] When $a,(a'-b')$ satisfy the above condition and nonzero, the model $H=H_1+H_2$ has the same ground state degeneracy as the original model.

    \item[(2)] At any nonzero temperature, the Gibbs state after tracing out the bosons $|n_i\rangle,|m_j\rangle$ produces the ground state of the original model $H_0$.
\end{itemize}

\begin{proof}

Since every local Hamiltonian term in $H_2$ commutes with $H_1$, the Gibbs state at temperature $T=1/\beta$ factorizes into
\begin{equation}
    e^{-\beta H}=e^{-\beta H_1}e^{-\beta H_2}~.
\end{equation}
Let us first trace out the environment bosons $|m_i\rangle$:
\begin{align}
    \text{Tr}_{|m_i\rangle} e^{-\beta H}&=
    e^{-\beta H_1} \text{Tr}_{|m_i\rangle} e^{-\beta H_2}
    =e^{-\beta H_1}\prod_i \frac{e^{-\beta\left(a'-b'\delta_{(n_i-n_{i+1}),0}\right)}}{1-e^{-\beta\left(a'-b'\delta_{(n_i-n_{i+1}),0}\right)}}\cr 
    &=e^{-\beta H_1}\prod_i e^{-\beta'(c-\delta_{(n_i-n_{i+1}),0})}~,
\end{align}
where $c=c(a,b,\beta)$ is a constant, and $\beta'=\beta'(a,b,\beta)$ is the effective inverse temperature. They satisfy
\begin{equation}
    e^{-\beta' c}=\frac{e^{-\beta a'}}{1-e^{-\beta a'}},\quad  e^{-\beta'(c-1)}=\frac{e^{-\beta(a'-b')}}{1-e^{-\beta (a'-b')}}~,
\end{equation}
and thus
\begin{equation}
    \beta'=\beta b'+\log\left(\frac{1-e^{-\beta a'}}{1-e^{-\beta (a'-b')}}\right)~.
\end{equation}
We now take $a'\rightarrow b'^+$, which gives $\beta'\rightarrow \infty$, resulting in the projector to the ground state of the Hamiltonian $\sum_i \left(c-\delta_{(n_i-n_{i+1})}\right)$, i.e. the state with ordered $n_i=n_{i+1}=n$.
Thus 
\begin{equation}
    (a'\rightarrow b'^+):\quad \text{Tr}_{|m_i\rangle} e^{-\beta H}
=e^{-\beta H_1}\mathbb{P}(n_i=n)~.
\end{equation}
Further tracing out $|n_i\rangle$ then produces 
\begin{align}
(a'\rightarrow b'^+):\quad     &\text{Tr}_{|n_i\rangle,|m_i\rangle}e^{-\beta H}=\sum_I \frac{e^{-\beta(aV+bE_I)}}{1-e^{\beta(aV+b E_I)}}|E_I\rangle\langle E_I|
\cr &\;=\frac{e^{-\beta aV}}{1-e^{-\beta aV}}|E_0=0\rangle\langle E_0=0|+\sum_{I>0}\frac{e^{-\beta(aV+bE_I)}}{1-e^{\beta(aV+b E_I)}}|E_I\rangle\langle E_I|~,
\end{align}
where $V$ is the spatial volume.
Now, we take $a=0$ for $E_0=0$ the ground state energy of $H_0$. 
The ground state term dominates over the excited state terms by the factor
\begin{equation}
    \frac{1-e^{-\beta (aV+bE_I)}}{1-e^{-\beta (aV)}}~,
\end{equation}
where $E_I>0$. For models with first excited states that satisfy $E_1\propto V^{-\eta}$, it is sufficient to take $a\ll {\cal O}\left(V^{-(1+\eta)}\right)$ for the ratio to approach infinite, i.e. the ground states dominate over the other states.
As a consistency check, we note that for high temperature $\beta\rightarrow 0$, we can approximate the ratio by
\begin{equation}
    \frac{1-e^{-\beta (aV+bE_I)}}{1-e^{-\beta (aV)}}\approx \frac{aV+bE_I}{aV}=1+\frac{bE_I}{aV}~.
\end{equation}
Thus as long as $aV$ is much smaller than $bE_I$, the ground state contribution dominates over the excited states contribution.
This then gives the ground state $|E_0=0\rangle\langle E_0=0|$ of the $H_0$ at zero temperature.
\end{proof}

\section{Proof of Theorem 3.1}

\begin{proof}

The Gibbs state of $H$ at inverse temperature $\beta=1/T$ with the environment bosons traced out is
\begin{align}
\text{Tr}_{\{|n\rangle\}}e^{-\beta H}&=    \sum_{n_i}\langle n_i| e^{-\beta H}|n_i\rangle
    =
    \prod_i \left(e^{-\beta (a-bP_i)}+e^{-2\beta (a-bP_i)}+\cdots\right)\cr 
    &=\prod_i \left(
    e^{-\beta a}\left(1+(e^{\beta b}-1)P_i\right)
+
    e^{-2\beta a}\left(1+(e^{2\beta b}-1)P_i\right)
    +\cdots
\right)\cr 
&=\prod_i \left(
\frac{e^{-\beta a}}{1-e^{-\beta a}}(1-P_i)+\frac{e^{-\beta(a-b)}}{1-e^{-\beta(a-b)}} P_i
\right)\cr 
&=\prod_i e^{A+BP_i}=\prod_i\left(e^A\left(1+(e^{B}-1)P_i\right)\right)~,
\end{align}
where we have used $P_i^2=P_i$, $e^{\alpha P_i}=1+\alpha P_i+\frac{1}{2!}\alpha^2P_i+\cdots=1+(e^{\alpha}-1)P_i$, and in the last line $A,B$ are
\begin{equation}
    e^A=\frac{e^{-\beta a}}{1-e^{-\beta a}},\quad 
    e^{B}=1+e^{-A}\left(\frac{e^{-\beta(a-b)}}{1-e^{-\beta(a-b)}}-\frac{e^{-\beta a}}{1-e^{-\beta a}}\right)=
    \frac{1-e^{-\beta a}}{e^{-\beta a}}\frac{e^{-\beta(a-b)}}{1-e^{-\beta(a-b)}}=e^{\beta b}\frac{1-e^{-\beta a}}{1-e^{-\beta(a-b)}}~.
\end{equation}
This implies that the thermal density matrix of the modified model $H$ is equivalent to that of the original model $H_0$ at the effective inverse temperature $\beta_\text{eff}=B$:
\begin{equation}\label{eqn:effectivetemp}
    \beta_\text{eff}=\beta b+\log\left(\frac{1-e^{-\beta a}}{1-e^{-\beta(a-b)}}\right)~.
\end{equation}
 In other words, the entropic order model $H$ at nonzero temperature $T$ tracing out the environment is equivalent to the ordinary topological order model $H_0$ at the above effective temperature $T_\text{eff}=1/\beta_\text{eff}$.
\end{proof}

\section{Detail of how entropic model evades HMW Theorem}

For spin-$s$, the Heisenberg model is
\begin{align}
    H_0&=J\sum_i \left(s^2-\frac{1}{2}\left(S_i^+S_{i+1}^-+S_i^-S_{i+1}^+\right)-S_i^zS_{i+1}^z\right)\cr
    &=J\sum_i \left(2s(2s+1)-(S_i^a+S_{i+1}^a)(S_i^a+S_{i+1}^a)\right)~.
\end{align}

To see the vanishing denominator on the right hand side of (\ref{eq:MWinequality}) as $k\to 0$, we express the denominator as
\begin{align}
\frac{1}{L} [Q_k, [H(h), Q_k^\dagger]] = A_k + h B_k~,
\label{eq:Ak+hBk}
\end{align}
where 
\begin{align}
\begin{split}
    A_k &=  \frac{-1}{L} \sum_{x,y} \langle [Q_x, [H,Q_y]] \rangle (1-\cos(k(x-y)))~, \\
    B_k &=  \frac{-1}{L} \sum_{x,y} \langle [Q_x, [O,Q_y]] \rangle \cos(k(x-y))~.
\end{split}
\end{align}
If $H_x, Q_x$ are both bounded, $\langle [Q_x, [H,Q_y]] \rangle$ remains $O(1)$, which leads to vanishing asymptotic behavior $A_k\propto k^2$ as $k\to 0$.

Now, in the limit of $L\to\infty$, the sum over $k$ in \eqref{eq:MWinequality} is converted into the integral over $-\pi\le k\le \pi$. \eqref{eq:MWinequality} is then satisfied in the limit $\lim_{h\to 0^+} \lim_{L\to\infty}$ only if
\begin{align}
    \int^\pi_{-\pi}\frac{dk}{2\pi} \frac{2T|m|^2}{A_k} = O(1)~,
\end{align}
where $m:=\langle O_x\rangle_{h\to 0^+}$ is the order parameter. Since $A_k\propto k^2$ as $k\to 0$, one needs to satisfy $m=0$, therefore the symmetry cannot be broken at finite temperature.

Let us now consider the entropic Heisenberg model at $s=1/2$:

\begin{equation}
\begin{split}
    H &=H_1+H_2~,\quad \\
    H_1 &=\sum_i \left(a+b(1-X_iX_{i+1}-Y_{i}Y_{i+1} - Z_{i} Z_{i+1})\right)(1+n_i)~, \\
    H_2 &=\sum_i (a'-b' \delta_{(n_i-n_{i+1}),0})(1+m_i)~,
    \end{split}
\end{equation}
where the $U(1)$ symmetry is generated by $Q=\frac{1}{2}X,$ with the order parameter $O=\frac{1}{2} Z, R=-\frac{1}{2}Y$. In this entropic model, \eqref{eq:MWinequality} is compatible with the SSB due to unbounded local Hamiltonian. In the entropic model, $A_k$ in the expression \eqref{eq:Ak+hBk} is given by
\begin{align}
\begin{split}
    A_k &= \frac{-1}{L} \sum_{x,y} \langle [Q_x, [H,Q_y]] \rangle (1-\cos(k(x-y))) \\
    &= \frac{2b}{L}(1-\cos k)\sum_i \left\langle (Y_iY_{i+1}+Z_{i}Z_{i+1})(1+n_i)\right\rangle~.
    \end{split}
\end{align}
Note that the eigenvalue of the operator $(Y_iY_{i+1}+Z_{i}Z_{i+1})(1+n_i)$ is unbounded due to the boson operator.

Let us take the limit of $a'\to b'^+, a\to 0^+, L\to \infty$ while fixing $h$ to be finite. After tracing out the bosons $\{m_i,n_i\}$ with this limit, only the polarized ground state of the Heisenberg model satisfying $H_i=Z_i=1$ contributes to the expression,
\begin{align}
\begin{split}
     & \lim_{a\to 0^+}\lim_{a'\to b'^+} \frac{1}{L}\mathrm{Tr}_{n_i,m_i}\left(e^{-\beta H}\sum_i(Y_iY_{i+1}+Z_{i}Z_{i+1})(1+n_i)\right) \\
     &\qquad = \lim_{a\to 0^+} \left[\left(\frac{e^{-2\beta a}}{(1-e^{-\beta a})^2} + \frac{e^{-\beta a}}{1-e^{-\beta a}} \right) \ket{0}\bra{0} + ... \right],\\
     & \lim_{a\to 0^+}\lim_{a'\to b'^+}\mathrm{Tr}_{n_i,m_i}e^{-\beta H}
     =\lim_{a\to 0^+} \left[\left(\frac{e^{-\beta a}}{1-e^{-\beta a}} \right) \ket{0}\bra{0} + ... \right],
     \end{split}
\end{align}
where ... denotes the contribution of excited states which is suppressed by the limit $a\to 0$, and $\ket{0}$ is the ground state with $Z_i=1$. Therefore
\begin{align}
\lim_{a\to 0^+}\lim_{a'\to b'^+} A_k = 2b(1-\cos k)  \lim_{a\to 0^+}\left(\frac{e^{-\beta a}}{1-e^{-\beta a}} \right)~,
\end{align}
which diverges to infinity. Hence, the sum over $k$ in the rhs of \eqref{eq:MWinequality} with $k=2\pi j/L, j\in \Z_L$ becomes zero in the limit $\lim_{a\to 0^+}$ due to the diverging denominator, even if the numerator remains nonzero. As a result, \eqref{eq:MWinequality} is trivially satisfied in the entropic model in the limit 
\begin{align}
    \lim_{h\to 0^+}\lim_{L\to\infty}\lim_{a\to 0^+}\lim_{a'\to b'^+}
\end{align}
even if the order parameter in the numerator, $\langle O_x\rangle_{h\to 0^+}$, remains finite. Therefore, the Bogoliubov inequality is compatible with SSB in the entropic model.

More generally, suppose we assume the following:
\begin{itemize}
\item All operators $O_x$, $Q_x$, and $R_x$ have finite operator norms, uniformly bounded from above;
\item The 1d spin chain Hamiltonian $H$ is strictly local, in the sense that it can be written as a sum of local terms, each supported within an interval with the length $\xi$ for some finite locality length $\xi$; and
\item The order parameter $\langle O_x\rangle$ remains finite in the limit $\lim_{h\to 0}\lim_{L\to\infty}$, so that the continuous symmetry is spontaneously broken.
\end{itemize}
Then the quantity $\langle [Q_x,[H,Q_y]]\rangle$ must diverge for at least one pair of points $(x,y)$ with separation bounded by the locality length, namely $|x-y|\le \xi$. Otherwise, the original Hohenberg-Mermin-Wagner theorem would apply, ruling out spontaneous breaking of the continuous symmetry. This in turn implies that the eigenvalues of the local Hamiltonian within the interval of length $\xi$ must be unbounded.

\section{Quantum Phase Transitions in Entropic Loop Toric Code}

The range of couplings that exhibit topological order is the following:
\begin{equation}\label{eqn:effectivetempp}
    \text{Topological Orders (TO)}:\ \frac{1}{T_c}<\beta b+\log\left(\frac{1-e^{-\beta a}}{1-e^{-\beta(a-b)}}\right)
    .
\end{equation}
This implies that entropic topological order is realized with either of the following temperature ranges:
\begin{align}\label{eqn:enquantum}
\text{Entropic Topological Order}:\quad
 &(1)\quad 
\frac{T}{T_c}<b<a
 \cr 
 &(2)\quad 
        0<b<\frac{T}{T_c}, \ b<a<T\log\left(\frac{1-e^{-\beta_c}}{e^{-\beta b}-e^{-\beta_c}}\right),
\end{align}
For $a,b$ in the above range, the entropic model is in stable topological phase without fine tuning.
For the original model being the 4+1d loop toric code, the transition is known to be the 4+1d Ising fixed point \cite{Weinstein:2019nmz}.

The entropic model exhibits classical memory for couplings that satisfy the following condition:
\begin{equation}
\text{Classical Memory}:\quad     \frac{1}{T_c'}<\beta b+\log\left(\frac{1-e^{-\beta a}}{1-e^{-\beta(a-b)}}\right)<\frac{1}{T_c}~.
\end{equation}
Solving the equation gives either of the following two conditions for realizing entropic classical memory:
\begin{align}\label{eqn:enclassical}
   (1) &\;\frac{T}{T_c'}<b<\frac{T}{T_c},\quad 
    b<a,\quad 
    T\log\left(\frac{1-e^{-\beta_c}}{e^{-\beta b}-e^{-\beta_c}}\right)<a~,\text{ or}\cr 
    (2) &\; 0<b<\frac{T}{T_c'},\quad \cr & b<a,\quad   T\log\left(\frac{1-e^{-\beta_c}}{e^{-\beta b}-e^{-\beta_c}}\right)<a<T\log\left(\frac{1-e^{-\beta_c'}}{e^{-\beta b}-e^{-\beta_c'}}\right)~.
\end{align}
For $a,b$ in the above range, the entropic model does not have topological order, but it has classical entanglement with nonzero entanglement entropy. The conditions in (\ref{eqn:enclassical}) and (\ref{eqn:enquantum}) are mutually exclusive.

\section{Absence of Entropic Order with Classical Random Noise
}

In the entropic models we discussed, we can regard the bosons with different states $|n\rangle$ as a quantum environment. Here, let us explore a different environment given by classical random variables. It is generally expected that the classical random noise tends to suppress ordered state \cite{Imry:1975zz,PhysRevLett.37.1364}, which is unlike the quantum bosons discussed in entropic models. Here, we will demonstrate this expectation explicitly using local commuting projector models coupled to random variables.

Consider the local commuting projector Hamiltonian
\begin{equation}
    H_0=-\sum_i P_i~.
\end{equation}
We couple the model to classical random variables $\eta_i$
\begin{equation}
    H[\eta_i]=\sum_i(a-bP_i)\eta_i~,
\end{equation}
where the random fields at different location $i$ are not correlated, and they obey the Gaussian distribution $P(\eta_i)= \frac{1}{\sqrt{2\pi \sigma}}e^{-\eta_i^2/(2\sigma)}$. The coupling constants $a,b$ are some real numbers and we do not make assumption about them other than being real and finite.

The averaged Gibbs state at temperature $T=1/\beta$ is
\begin{align}
    \rho_\text{avg}
    &=\overline{e^{-\beta H}}=\int \prod d\eta_i P(\eta_i) e^{-\beta H[\eta_i]}=
    \overline{e^{-\beta H}}=\int \prod d\eta_i P(\eta_i) e^{-\beta a\eta_i}\left(1+\left(e^{\beta b\eta_i}-1\right)P_i\right)\cr 
    &=\prod_i\left( e^{\frac{1}{2}\beta^2 a^2\sigma}+\left(e^{\frac{1}{2}\beta^2 (a-b)^2\sigma}-e^{\frac{1}{2}\beta^2 a^2\sigma}\right)P_i\right)=\prod_i e^{A+BP_i}~,
\end{align}
where
\begin{equation}
     e^{B}=e^{\frac{1}{2}\beta^2(b^2-2ab)\sigma}~.
\end{equation}
Thus the state is equivalent to the Gibbs state of the original model $H_0$ at the effective temperature 
$B=\beta_\text{eff}$:
\begin{equation}\label{eqn:random}
\beta_\text{eff} =\frac{1}{2}\beta^2(b^2-2ab)\sigma~.
\end{equation}
Here we see that for any finite couplings $a,b$ and finite deviation $\sigma$ of the random variables, the effective temperature $T_\text{eff}=1/\beta_\text{eff}$  cannot be zero at nonzero finite physical temperature $T=1/\beta$.

In fact, for $b=2a$ the state is the maximally mixed state even at zero physical temperature.

\begin{theorem}
    The averaged Gibbs state with Gaussian random noise that has finite deviation is the maximally mixed state at all zero or non zero temperatures $T$ when
\begin{equation}\label{eqn:randomcondition}
   | b-2a|\ll T^2 ~,
\end{equation}
such as the case $b=2a$.
\end{theorem}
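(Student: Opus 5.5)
The plan is to reuse the computation just performed for the disorder-averaged Gibbs state and read off when the effective inverse temperature vanishes. For each site $i$ we use $P_i^2=P_i$, so $e^{-\beta(a-bP_i)\eta_i}=e^{-\beta a\eta_i}\left(1+(e^{\beta b\eta_i}-1)P_i\right)$. The Gaussian average of each exponential is elementary, $\overline{e^{\lambda\eta_i}}=e^{\frac12\lambda^2\sigma}$. Because the $\eta_i$ are independent and the $P_i$ commute, the average factorizes into a product over sites:
\begin{equation*}
\rho_\text{avg}=\prod_i e^{A}\left(1+(e^{B}-1)P_i\right),\qquad \beta_\text{eff}=B=\tfrac12\beta^2 b(b-2a)\sigma ,
\end{equation*}
which is (\ref{eqn:random}). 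The overall constant $e^{A}$ with $A=\frac12\beta^2a^2\sigma$ drops out after normalization. The normalized state is therefore exactly the Gibbs state of $H_0=-\sum_i P_i$ at inverse temperature $\beta_\text{eff}$, and it equals the maximally mixed state precisely when $e^{B}-1=0$.

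Next we rewrite the condition in terms of $T$. Since $\beta=1/T$, we have $\beta_\text{eff}=\frac{\sigma b}{2T^2}(b-2a)$. For fixed finite $b$ and finite $\sigma$, the condition $|b-2a|\ll T^2$ gives $|\beta_\text{eff}|\ll 1$. Then each factor satisfies $1+(e^{B}-1)P_i=1+O(|B|)P_i$, and the normalized state is maximally mixed up to corrections controlled by $\beta_\text{eff}$, so $\langle\mathcal O\rangle=2^{-N}\mathrm{Tr}\,\mathcal O+O(\beta_\text{eff})$ for local $\mathcal O$. At $b=2a$ we get $B=0$ identically, independent of $\beta$. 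The state is then literally $\propto\mathbf 1$ for every $T>0$, and also in the $T\to0$ limit, since the limit of a $\beta$-independent expression is trivial.

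The delicate point is the meaning of "$\ll$" and the zero-temperature statement. When $b\neq2a$, $\beta_\text{eff}$ grows like $T^{-2}$ as $T\to0$. So the claim at $T=0$ must be understood as requiring $|b-2a|/T^2\to0$, which for fixed couplings forces $b=2a$. I would state the theorem in this sense: maximal mixedness holds exactly whenever $b(b-2a)=0$, and approximately whenever $\sigma|b(b-2a)|\ll T^2$. The only other thing to check is that local expectation values in the thermodynamic limit are controlled uniformly in system size. This holds because the state is a product of commuting single-term factors, so a local observable only sees the $O(1)$ factors whose $P_i$ fail to commute with it.
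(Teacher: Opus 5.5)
Your main argument is the paper's. You recover (\ref{eqn:random}), $\beta_\text{eff}=\sigma b(b-2a)/(2T^2)$, and read off that it vanishes when $|b-2a|/T^2\to0$ at fixed finite $b,\sigma$, identically so for $b=2a$. Your remarks that the precise condition is $\sigma|b(b-2a)|\ll T^2$, and that at $T=0$ the hypothesis can only mean $b=2a$, are correct sharpenings of what the paper leaves implicit.

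One step is wrong: your justification of uniformity in system size. A factor $1+(e^{B}-1)P_i$ that commutes with a local $\mathcal O$ does not drop out of $\mathrm{Tr}(\rho_\text{avg}\mathcal O)/\mathrm{Tr}\,\rho_\text{avg}$, because the trace of a product of operators with overlapping support does not factorize. For example, take $P_i=\frac12(1+Z_iZ_{i+1})$. Every $P_i$ commutes with $Z_0Z_r$, so your rule would predict the infinite-temperature value $0$. Yet in the infinite chain $\langle Z_0Z_r\rangle=\tanh^r(\beta_\text{eff}/2)\neq0$, a quantity built from a whole string of factors.

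An $N$-independent $O(\beta_\text{eff})$ bound on local expectation values instead follows from a high-temperature polymer (cluster) expansion. Write $\prod_i(1+cP_i)=\sum_S c^{|S|}\prod_{i\in S}P_i$ with $c=e^{B}-1$; normalized traces factorize over components of $S$ with disjoint supports. The expansion converges for $|c|$ below a threshold fixed by the locality of $H_0$. The paper does not address this point at all, and for $b=2a$ it is moot, since $\rho_\text{avg}\propto\mathbf 1$ exactly.
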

\begin{proof}
    This follows from the relation (\ref{eqn:random}): when $|b-2a|/T^2\rightarrow 0$, $\beta_\text{eff}\rightarrow 0$, i.e. the effective temperature is infinite $T_\text{eff}\rightarrow\infty$.
    Thus the averaged Gibbs state is the maximally mixed state.
\end{proof}
We note that any averaged correlation function of operators that do not contain the random variable $\eta_i$ are given by the correlation function of the original model in the Gibbs state with the effective temperature $T_\text{eff}$,
\begin{equation}
    \overline{\text{Tr}\left(\rho[\eta] {\cal O}_1{\cal O}_2\cdots\right)}=
    \text{Tr}\left(\overline{\rho[\eta]}{\cal O}_1{\cal O}_2\cdots\right)=\text{Tr}\left(\rho_\text{avg}{\cal O}_1{\cal O}_2\cdots\right)=\text{Tr}\left(e^{-\beta_\text{eff}H_0}{\cal O}_1{\cal O}_2\cdots\right)~.
\end{equation}

The result is consistent with the general property that random fields make symmetry breaking harder instead of easier, e.g. the Imry-Ma argument \cite{Imry:1975zz}. Indeed, in this case, with the condition (\ref{eqn:randomcondition}) for the couplings $a,b$, the system cannot be ordered even when the physical temperature approaches zero in general dimensions.

{\bf Entropic order for approximately uniform noise distribution. }
We note that in the extreme case that the random noise approximately has uniform distribution $\sigma\rightarrow\infty$, then entropic order is possible, since if $\sigma\gg 1/|b^2-2ba|$, for any finite temperature $T$ the effective temperature approaches zero. Then the state is equivalent to the zero temperature state of the original model, and it can develop an entropic order.